\documentclass[12pt]{article}
\setlength{\voffset}{-.75truein}
\setlength{\textheight}{9.25truein}
\setlength{\textwidth}{6.5truein}
\setlength{\hoffset}{-.7truein}

\usepackage{amsmath,amsfonts,amsthm,amssymb}

\usepackage{fancybox}
\usepackage{amstext}

\newcommand{\version}{December 13, 2015}

\renewcommand\footnotemark{}

\swapnumbers
                                %
                                %
\font\notefont=cmsl8

\pagestyle{myheadings}
\markright{\notefont
    EHLJPS-\version---SU(N) coherent states and entropy \hfill}
                                %

                                %
\theoremstyle{plain}
\newtheorem{thm}{THEOREM}[section]
\newtheorem{lm}[thm]{LEMMA}
\newtheorem{cl}[thm]{COROLLARY}

\theoremstyle{definition}

\theoremstyle{remark}

                                %
\newcommand{\upchi}{\raise1pt\hbox{$\chi$}}
\newcommand{\R}{{\mathord{\mathbb R}}}
\newcommand{\C}{{\mathord{\mathbb C}}}
\renewcommand{\P}{{\mathord{\mathbb P}}}

\newcommand{\N}{{\mathord{\mathbb N}}}
\newcommand{\rS}{{\mathord{\mathbb S}}}

\newcommand{\cT}{{\mathord{\cal T}}}

\newcommand{\cH}{{\mathord{\cal H}}}

\newcommand{\cW}{{\mathord{\mathcal W}}}
\newcommand{\Tr}{{\mathord{\rm Tr}}}

\newcommand{\su}{{$SU(N)\ $}}

\usepackage{color}\definecolor{darkred}{cmyk}{0,1,1,0.4}
\newcommand{\ch}[1]{{#1}}
\begin{document}
\title{Proof of the Wehrl-type Entropy Conjecture for Symmmetric
  $SU(N)$ Coherent States \relax\footnotetext{\copyright 2015 \ by the
    authors. This article may be reproduced in its entirety for
    non-commercial purposes.}  \footnote{Work partially supported by
    NSF grant PHY-1265118 (EHL) and ERC grant ERC-2012-AdG 321029
    (JPS). The authors also thank IPAM at UCLA, IHP in Paris, and the
    Erwin Schr\"odinger institute in Vienna, where part of this work
    was done.}  }

\baselineskip=20pt

\author{
  \begin{tabular}{ccc} 
    Elliott H. Lieb&\hspace{2cm}& Jan Philip Solovej \\     \\
    \normalsize Departments of Physics and Mathematics && 
    \normalsize Department of Mathematics\\ 
    \normalsize Jadwin Hall, Princeton University && \normalsize University
of Copenhagen\\
    \normalsize Washington Road&&\normalsize Universitetsparken 5\\
    \normalsize Princeton, N.J. 08544  & &
    \normalsize DK-2100 Copenhagen, Denmark\\
    \normalsize {\it e-mail\/}: lieb@princeton.edu &&
    \normalsize {\it e-mail\/}: solovej@math.ku.dk
  \end{tabular}
  \bigskip
  \date{\version}}

                                \maketitle

\begin{abstract}
  The Wehrl entropy conjecture for coherent (highest weight) states in
  representations of the Heisenberg group, which was proved in 1978
  and recently extended by us to the group $SU(2)$, is further
  extended here to symmetric representations of the groups $SU(N)$ for
  all $N$. This result gives further evidence for our conjecture that
  highest weight states minimize group integrals of certain concave
  functions for a large class of Lie groups and their representations.
\end{abstract}

 \section{Introduction}  \label{intro} 
 
 With the aid of coherent states, A. Wehrl \cite{AW} introduced the
 idea of a `classical entropy' associated to a quantum density
 matrix. He showed that it has the desirable feature of being positive
 and conjectured that the minimum entropy over all density matrices
 would be achieved by a one-dimensional projector onto a coherent
 state. He stated the conjecture only for Glauber coherent states on
 $L^2(\R^n)$, and this was proved shortly thereafter in \cite{L2}, in
 which the conjecture was extended to $SU(2)$. This $SU(2)$ conjecture
 was finally settled by us 35 years later \cite{LS}, although there
 were several special cases proved earlier \cite{Bo,GZ,Lu,Sch,Scu,Su}.

 We believe that the analog of Wehrl's conjecture should hold true, at
 least for a wide class of Lie groups (see also \cite{BZ,SZ}). If so,
 this would presumably have some general significance for
 representation theory since there are not very many theorems about
 integrals over the group of finite dimensional representations.  The
 progress reported here is a proof of the conjecture for all symmetric
 representations of $SU(N)$, i.e., the representations corresponding
 to one-row Young diagrams. In the case of $SU(2)$ there are no other
 representations. For $SU(N)$ our proof trivially generalizes to the
 conjugate of the symmetric representations which are unitarily
 equivalent to the representations with young diagrams having $N-1$
 rows of equal length. We note that the proof we give now is much
 improved over that in \cite{LS}, which did not generalize to $SU(N)$
 for $N\geq3$.  The improvement utilizes a combinatorial identity (see
 (\ref{eq:chiribella}) below) that replaces some $SU(2)$-specific
 relations used before.  Non-symmetric representations, corresponding
 to other multi-rowed Young diagrams, are not addressed.

 As in \cite{LS} we extend the conjecture to all concave functions,
 not only to the entropy function $-x\ln x$. Again, we prove the
 conjecture for $SU(N)$ Wehrl entropy as the infinite dimensional
 limit of a sequence of finite dimensional theorems.

We begin with a very brief reminder of the conjecture. Let $\cH$ be a
Hilbert space for an irreducible representation of a suitable
(e.g., compact, simple and connected) Lie group $G$,
let $\Omega_I\in \cH $ be a normalized highest weight vector, called a
{\bf coherent state vector}, let 
$\Omega_R = R\Omega_I $ for $R\in G$,  and let $v\in \cH $ 
be any other normalized vector. Naturally, $\Omega_R$ is also a highest
weight vector.  Next form the Husimi function \cite{Hu} on $G$ , which is defined by the inner
product $|\langle v\, |\, \Omega_R \rangle|^2$. With $f(x) = -x\ln x$, we 
define Wehrl's classical entropy $S(v)$ by the (normalized)
Haar measure integral
\begin{equation}
\label{eq:int}
 S(v) = \int_G f(|\langle v\, | \, \Omega_R\rangle|^2)\,  {\rm d} R.
\end{equation}

The integral above may be considered on the space of equivalence
classes corresponding to $R\sim R'$ if $\Omega_R$ and $\Omega_{R'}$
are equal up to a phase. This space of equivalence classes (the
co-adjoint orbit of highest weight vectors) has a natural symplectic
structure for which the Liouville measure agrees with the measure
inherited from the Haar measure (see, e.g., \cite{Si}).  In this sense the Husimi function
becomes a probability distribution on a classical phase space, and the
number $S(v)$ is a classical entropy on that space, corresponding to the state $v$. The
symplectic structure, however, plays no role for our purpose, and we will
consider $S(v)$ as an integral over the group, as defined in
(\ref{eq:int}).

The conjecture is that $S(v)$ is minimized when the normalized $v$ is
any of the vectors $\Omega_R$. Note that $S(v)>0$ since the Husimi
function is less than one almost everywhere, unlike the Boltzmann
entropy which can be negative.  The extended conjecture is that this
minimization property also holds if $f$ in (\ref{eq:int}) is any
concave function.

The group considered in this paper is $SU(N)$ and the irreducible
representations are the totally symmetric ones (and their conjugates), 
defined in the next section.
 
The Husimi function that associates a classical distribution function
to a quantum state can be generalized to certain maps associating
quantum states on one representation space to states on another
representation space.  The maps in question are completely positive
trace preserving and are called `quantum channels'.  As we shall
discuss below the particular channels we study are sometimes referred
to as the `universal quantum cloning channels'.  A further
generalization of the entropy conjecture is that it has a natural
extension to these channels, i.e., that highest weight vectors
minimize the trace of concave functions of the channel output (see
Theorem~\ref{thm:main}). In particular, we thus determine the minimal
output entropy of the universal cloning channels. In the proof we use
that the minimal output of cloning channels agrees with the
minimal output of what is called the `measure-and-prepare channels'
(see \cite{Ch}). As a corollary we therefore also determine their
minimal output entropy (See Theorem~\ref{cl:Wk}).

The original Wehrl-type conjecture for the Husimi function will be
derived as the limit of the finite dimensional results.  In a similar
way, we showed in \cite{LS} how to prove the original Glauber and
$SU(2)$ (Bloch) coherent state conjectures from the infinite
dimensional limit of finite dimensional representations of
$SU(2)$. Although the story begins with the proof in \cite{L2} of the
Wehrl entropy conjecture for Glauber states, it is only in \cite{LS}
that the generalization to all concave functions was achieved for
Glauber states.
 
\section {Symmetric Irreducible Representations of $SU(N)$}

The symmetric irreducible representations (irrep) of $SU(N)$ are
obtained by taking $M$ symmetric copies of the fundamental
representation. We consider totally symmetric tensor products
of $N$-dimensional complex space, i.e., for $M\in\N$,
$\cH_M=P_M\bigotimes^M\C^N$.  Here,
$P_M:\bigotimes^M\C^N\to\bigotimes^M\C^N$ is the projection onto the
symmetric subspace, i.e., for $u_i \in \C^N$,
\begin{equation} \label{eq:pm}
P_Mu_1\otimes\cdots\otimes u_M=\frac1{M!}\sum_{\sigma\in
S_M}u_{\sigma(1)}\otimes\cdots\otimes u_{\sigma(M)}.
\end{equation}

The group \su acts on $\bigotimes^M\C^N$ equally on
each factor, i.e., $R\in SU(N)$ acts as 
$R\otimes\cdots\otimes R$. This action commutes with $P_M$ and hence acts
on $\cH_M$. It is well-known that this is an irreducible representation, as
explained in the appendix.

The highest weight vectors are the ones in which all the $u_i$ are the
same normalized vector $u$, i.e., a highest weight vector is of the form
$\otimes^{M} u $, and projectors onto such vectors are called coherent states.

Recall that a density matrix on a Hilbert space is a positive
semi-definite operator of unit trace.
Our notation here is that $\langle u|v\rangle$ is the inner product of vectors $u$ and $v$, while 
$\langle u|\rho|v\rangle$ is the inner product of $u$ with $\rho v$. The projector onto a normalized vector 
$|u\rangle$ is denoted $|u\rangle\langle u|$. 
\begin{thm}[{\bf Generalized Wehrl Inequality}]\label{thm:wehrl}
Let $f:[0,1]\to\R$ be a concave function. Then for any
density matrix $\rho$ on $\cH_M$ we have
\begin{equation}\label{eq:conjecture}
\int_{SU(N)}f(\langle\otimes^M(Ru)|\rho|\otimes^M(Ru)\rangle) {\rm d} R
\geq\int_{SU(N)}
f(|\langle\otimes^M(Ru)|\otimes^Mu\rangle|^2) {\rm d} R .
\end{equation}
Here $u$ is any (by $SU(N)$ invariance) normalized vector in
$\C^N$. In other words the integral on the left is minimized for
$\rho=|\otimes^Mu\rangle\langle\otimes^Mu|$, i.e., $\rho$ is a coherent state.
\end{thm}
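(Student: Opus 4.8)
The plan is to reduce the generalized (concave-function) statement to a single, sharp operator inequality and then integrate. The first observation is that, by the standard Jensen/convexity argument used already in \cite{L2,LS}, it suffices to prove the following moment comparison: for every $k\in\N$,
\begin{equation*}
\int_{SU(N)}\langle\otimes^M(Ru)|\rho|\otimes^M(Ru)\rangle^k\, {\rm d} R
\ \leq\
\int_{SU(N)}|\langle\otimes^M(Ru)|\otimes^Mu\rangle|^{2k}\, {\rm d} R,
\end{equation*}
together with the trivial $k=0$ equality and the $k=1$ equality (both integrals equal $1/\dim\cH_M$, since $\int |\Omega_R\rangle\langle\Omega_R|\,{\rm d} R$ is the normalized identity on $\cH_M$ by Schur's lemma). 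Indeed, any concave $f$ on $[0,1]$ is a decreasing limit of functions of the form $a+bx-\sum c_j\,(\text{something})$; more cleanly, one approximates $f$ from above by affine-plus-concave combinations and uses that $t\mapsto -t^k$ data, assembled correctly, recover all concave functions in the limit $M\to\infty$ — this is exactly the device of \cite{LS}. So the whole theorem rests on the power inequality above.

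**Next I would rewrite each side as a trace against a fixed operator.** Write $Q_R=|\otimes^M(Ru)\rangle\langle\otimes^M(Ru)|$, a rank-one projector on $\cH_M$, and note $\langle\otimes^M(Ru)|\rho|\otimes^M(Ru)\rangle^k=\Tr\big(Q_R^{\otimes k}\,\rho^{\otimes k}\big)$ on $\cH_M^{\otimes k}$, while $|\langle\otimes^M(Ru)|\otimes^Mu\rangle|^{2k}=\Tr\big(Q_R^{\otimes k}\,Q_I^{\otimes k}\big)$ where $Q_I=|\otimes^Mu\rangle\langle\otimes^Mu|$. The key point is that $\int_{SU(N)} Q_R^{\otimes k}\,{\rm d} R$ is an $SU(N)$-invariant operator on $\big(\otimes^M\C^N\big)^{\otimes k}$ supported on the symmetric subspace of the $Mk$-fold tensor power of $\C^N$, hence (by Schur–Weyl / the fact that the totally symmetric power is irreducible) it is a \emph{constant multiple of the projection $P_{Mk}$ onto that symmetric subspace}, pushed through the inclusion $\cH_M^{\otimes k}\hookrightarrow \bigotimes^{Mk}\C^N$. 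Call the inclusion map $\iota_k\colon\cH_M^{\otimes k}\to\bigotimes^{Mk}\C^N$; then $\int Q_R^{\otimes k}\,{\rm d} R = \lambda_{k}\,\iota_k^*\,P_{Mk}\,\iota_k$ for an explicit constant $\lambda_k=\dim\cH_M^{\otimes k}/\dim\cH_{Mk}$-type ratio. Therefore the desired inequality becomes, after cancelling $\lambda_k$,
\begin{equation*}
\Tr_{\cH_M^{\otimes k}}\!\big(\rho^{\otimes k}\,\iota_k^*P_{Mk}\,\iota_k\big)\ \leq\
\Tr_{\cH_M^{\otimes k}}\!\big(Q_I^{\otimes k}\,\iota_k^*P_{Mk}\,\iota_k\big),
\end{equation*}
i.e., $\langle\Phi_\rho\,|\,P_{Mk}\,|\,\Phi_\rho\rangle$ is maximized, among suitable states $\Phi$, by the coherent tensor $\Phi=\otimes^{Mk}u$ — equivalently, the coherent state maximizes the ``symmetrization overlap''. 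This is where the combinatorial identity (\ref{eq:chiribella}) advertised in the introduction enters: it should give an exact formula for $\iota_k^*P_{Mk}\iota_k$ as an average over the symmetric group $S_k$ of permutation-type operators on $\cH_M^{\otimes k}$ (the ``swap'' operators that implement $\Tr\rho^2$, $\Tr\rho^3,\dots$), turning the RHS into a polynomial in the moments $\Tr(\rho^{\otimes j}\,\text{cyclic shift})$.

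**The main obstacle — and the heart of the argument — will be the last inequality**, i.e., showing that among all density matrices $\rho$ on $\cH_M$ the functional $\rho\mapsto \sum_{\sigma\in S_k} w_\sigma\,\Tr\big(\rho^{\otimes k}U_\sigma\big)$, with the specific nonnegative weights $w_\sigma$ coming from (\ref{eq:chiribella}), is maximized by the coherent rank-one projector $Q_I$. I expect this to follow by a monotonicity/representation argument in $M$: one writes the $SU(N)$ channel $\cE_M$ that sends $\rho$ on $\cH_M$ to $\int Q_R\,\langle\otimes^M(Ru)|\rho|\otimes^M(Ru)\rangle\,{\rm d} R$-type object on a higher $\cH_{M'}$, observes that coherent states go to coherent states, and that the relevant trace functional is \emph{multiplicative/monotone} under composing these channels — so the problem reduces to the limit $M\to\infty$, where $\cH_M$ becomes the full Fock/Bargmann space and one lands precisely on the already-proved Glauber-state Wehrl inequality of \cite{L2,LS}. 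Concretely: prove that the finite-$M$ $SU(N)$ inequality for parameter $k$ follows from the same inequality with $(M,N)$ replaced by $(M+1,N)$ (a ``going-up'' lemma, using that symmetric representations of $SU(N)$ embed compatibly), and close the induction at $M=\infty$ via the known Heisenberg-group result. The subtlety is tracking the combinatorial weights $w_\sigma$ through this limit and checking they remain nonnegative and stable; once that bookkeeping is done, concavity of $f$ plus the moment inequalities for all $k$ yields (\ref{eq:conjecture}), and specializing $f(x)=-x\ln x$ gives the Wehrl statement while the general $f$ gives the full theorem.
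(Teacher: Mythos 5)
There is a genuine gap at the very first step. You reduce the theorem to the integer-moment inequalities $\int\langle\Omega_R|\rho|\Omega_R\rangle^k\,{\rm d}R\leq\int|\langle\Omega_R|\Omega_I\rangle|^{2k}\,{\rm d}R$ and assert that, together with equality of the first moments, these "recover all concave functions in the limit." That implication is false in general: for two $[0,1]$-valued random variables with equal means, domination of all integer (or even all real $p\geq1$) moments does \emph{not} imply $E f(X)\geq E f(Y)$ for every concave $f$. The extreme rays of the cone of concave functions modulo affine functions are the hinge functions $x\mapsto\min(x,t)$, and these are not positive combinations (or monotone limits of such) of $\{-x^k\}$ plus affine functions; one needs the full continuous-majorization statement $\int(p_\rho-t)_+\leq\int(p_{\rm coh}-t)_+$ for all $t$. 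This is precisely why the paper (and the earlier $SU(2)$ work in \cite{LS}) does not argue via moments but instead proves an honest eigenvalue majorization, Theorem~\ref{thm:majorization}, and converts it to arbitrary concave $f$ via Karamata's theorem; the passage to the Husimi integral is then made through the Berezin--Lieb bound of Lemma~\ref{lm:BL} and an explicit computation of the $k\to\infty$ limit. Your moment identity $\int Q_R^{\otimes k}{\rm d}R=\lambda_k\,\iota_k^*P_{Mk}\iota_k$ is correct and attractive (it is a clean Schur's-lemma consequence of (\ref{eq:decomp}) applied on $\cH_{Mk}$), but even if you proved the resulting overlap inequality it would only give the theorem for the special concave functions $-x^k$, not for general $f$, and in particular not for $-x\ln x$ without further argument.

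The second problem is the proposed mechanism for the remaining inequality: a "going-up" lemma from $(M,N)$ to $(M+1,N)$ closed "at $M=\infty$" by the Glauber-state result. An implication chain $A_M\Leftarrow A_{M+1}\Leftarrow\cdots$ whose hypotheses are only verified in the limit does not establish any finite $A_M$; to descend from the contraction limit you would need quantitative error control that you have not supplied, and no such descent appears to be available. The paper's induction runs in the opposite, well-founded direction: it is an induction on the number $k$ of added clones, starting from the trivial base $k=0$ (any density matrix is majorized by a rank-one operator of the same trace), with the identity (\ref{eq:chiribella}) --- which expresses the measure-and-prepare operator $\cW_k$ as a positive combination $\sum_\ell C_\ell\,\cT^\ell(\gamma^{(k-\ell)}_\psi)$ of cloning maps applied to reduced density matrices, not a resolution of $\iota_k^*P_{Mk}\iota_k$ over $S_k$ as you suggest --- supplying the induction step, and with the simultaneous-diagonalization Lemma~\ref{lm:majorsums} allowing the majorizations of the summands to be added. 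As written, your argument establishes neither the moment inequality nor the passage from moments to general concave $f$, so both pillars of the proof are missing.
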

This theorem is proved in Section~\ref{sec:semiclassics}. 

The classical phase space discussed in the introduction is, in this
case of the symmetric irreps of $SU(N)$, the space of pure quantum states
on the one-body space $\C^N$. This is the space of unit vectors in
$\C^N$ modulo a phase, i.e., the complex projective space
$$
\C\P^{N-1}=\{u\in\C^N\ |\ |u|=1\}/\sim,
$$
where two vectors are equivalent under $\sim$ if they agree up to
multiplication by a complex phase. The complex projective space
$\C\P^{N-1}$ is a classical phase space, i.e., a symplectic
manifold\footnote{As already stated, we will not be concerned with the
  symplectic 2-form on $\C\P^{N-1}$, known as the Fubini-Study form.
  We will only need the corresponding volume form, which corresponds
  to the normalized Haar measure on $SU(N)$.}. For any function $h$
defined on $\C\P^{N-1}$ we have the equivalence of the normalized
integrations
\begin{equation}\label{eq:su-int}
\int_{SU(N)} h(Ru) {\rm d} R=  \int_{\{u\in\C^N : |u|=1\}} h(u) {\rm d} u
=\int_{\C\P^{N-1}} h(u) {\rm d} u,
\end{equation}
where we have abused notation and identified unit vectors $u$ with their
equivalence class in $\C\P^{N-1}$. The first integration above is over the
$N(N+1)/2$ dimensional real manifold $SU(N)$, the middle integration is
over the $2N-1$ dimensional real sphere, and the last integral is over the
$2N-2$ dimensional real manifold $\C\P^{N-1}$.
In the special case $N=2$ we have
that $\C\P^1$
is the 2-sphere $\rS^2$ (the Bloch sphere). For $N\geq 3$, the compact
manifold $\C\P^{N-1}$ is not a sphere.

\section{The Quantum Channels}
In order to prove the main theorem we introduce maps from operators
on the space $\cH_M$ to operators on $\cH_{M+k}$ for some $k\geq0$.

If $\Gamma$ is an operator on the symmetric tensor
product $\cH_M$ we can extend $\Gamma$ uniquely to an operator which we also call $\Gamma$ on the
larger space $\bigotimes^M\C^N$ such
that 
$$
\Gamma=P_M\Gamma=\Gamma P_M =P_M \Gamma P_M.
$$
That is, $\Gamma =0$ on vectors of non-symmetric symmetry type.

We can write $\Gamma$ using second quantization as 
$$
\Gamma=\frac1{M!}\sum_{i_1=1}^N\cdots\sum_{i_M=1}^N\sum_{j_1=1}^N\cdots\sum_
{j_M=1}^N
\Gamma_{i_1,\ldots,i_M;j_1,\ldots,j_M}a^*_{i_1}\cdots
a^*_{i_M}a_{j_M}\cdots a_{j_1},
$$ 
where we have introduced the matrix elements 
for the extended $\Gamma$ as
$$
\Gamma_{i_1,\ldots,i_M;j_1,\ldots,j_M}=\langle u_{i_1}\otimes\cdots\otimes
u_{i_M}|\Gamma|u_{j_M}\otimes\cdots\otimes u_{j_1}\rangle
$$
using an orthonormal basis $\{u_i\}_{i=1}^N$ for $\C^N$ and the
corresponding Fock space creation and annihilation operators
$a^*_i=a^*(u_i)$, $a_i=a(u_i)$ satisfying the well known canonical
commutation relations $[a_i,a_j]=0= [a_i^*,a_j^*]$ and $[a_i, a_j^*] =
\delta_{i,j}$. A more extensive discussion of creation and
annihilation operators can be found in Section 4 of our earlier $SU(2)$
paper \cite{LS}.

We study the operator 
$$
\cT^k(\Gamma)=\sum_{i_1,\ldots,i_k}a^*_{i_1}\cdots a^*_{i_k} \Gamma
a_{i_k}\cdots a_{i_1}
$$
on $\cH_{M+k}$.  The map $\cT^k$, which maps operators on $\cH_M$ to
operators on $\cH_{M+k}$, is completely positive. Note that we have
not normalized it to be trace preserving, in fact,
$$
\Tr_{M+k} \cT^k(\Gamma)=\frac{(M+k+N-1)!}{(M+N-1)!}\Tr_M\Gamma,
$$
where $\Tr_M$ refers to the trace in $\cH_M$. Thus 
$$
\widehat\cT^k=\frac{(M+N-1)!}{(M+k+N-1)!}\cT^k
$$
is completely positive and trace preserving.
We may also write the map $\cT^k$ as 
\begin{equation}\label{eq:cloning}
\cT^k(\Gamma)=\frac{(M+k)!}{M!}P_{M+k}\left(\left(\bigotimes^kI_{\C^N}
\right) \otimes\Gamma\right) P_{M+k}.
\end{equation}
In this form we recognize the map $\widehat\cT^k$ as the universal
$M$-to-$M+k$ cloning channel \cite{Br,Ch,GM,We}. The No-Cloning
Theorem states that exact cloning of a quantum state is
impossible. The universal cloning channels achieve the best degree of
cloning for general input states. We thank Kamil Br\'adler for pointing
out the relation to cloning channels.

\section{The Main Theorem for Quantum Channels}
Our main result on the cloning channels $\widehat\cT_k$ is that
coherent states minimize the trace of concave functions of the channel
output. If the concave function is $f(x)=-x\ln(x)$ this theorem says
that coherent states, which are pure states, give the minimal output
von Neumann entropy. Since we can prove the optimality of coherent
states for all concave functions we refer to this as the generalized
minimal output entropy.
\begin{thm}[{\bf Main Theorem: Generalized minimal output entropy of
    $\cT_k$}] \label{thm:main} For any concave function $f:[0,1]\to\R$
  and any density matrix $\rho$ on $\cH_M$ we have
$$
\Tr_{M+k}f(\widehat\cT^k(\rho))\geq \Tr_{M+k}f(\widehat\cT^k(|\otimes^Mu\rangle\langle \otimes^Mu|)).
$$
For $M\geq 2$ and $k>0$ and $f$ strictly concave equality holds if and only if
$\rho=|\otimes^Mu\rangle\langle \otimes^Mu|$ for some unit vector
$u\in\C^N$.
\end{thm}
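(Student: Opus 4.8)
The plan is to prove Theorem~\ref{thm:main} by reducing it to a majorization statement and then establishing that with the combinatorial identity~(\ref{eq:chiribella}) together with the Berezin--Lieb inequalities. First, since $A\mapsto\Tr f(A)$ is concave for concave $f$ and $\widehat\cT^k$ is linear, $\rho\mapsto\Tr_{M+k}f(\widehat\cT^k(\rho))$ is concave on the compact convex set of density matrices, so its minimum is attained at an extreme point, i.e.\ at a pure state $\rho=|\psi\rangle\langle\psi|$, $\psi\in\cH_M$; it therefore suffices to treat pure $\rho$. Writing a general concave $f$ on $[0,1]$ as a limit of functions $a+bx-\sum_ic_i(x-t_i)_+$ with $c_i\ge0$ and using that $\Tr_{M+k}\widehat\cT^k(\rho)$ is independent of $\rho$, it suffices to prove, for each $t\in[0,1]$,
\[
\Tr_{M+k}\bigl(\widehat\cT^k(|\psi\rangle\langle\psi|)-t\bigr)_+\ \le\ \Tr_{M+k}\bigl(\widehat\cT^k(|\otimes^Mu\rangle\langle\otimes^Mu|)-t\bigr)_+ ,
\]
i.e.\ that the coherent-state output majorizes every channel output: for every $n$ the sum of the $n$ largest eigenvalues of $\widehat\cT^k(|\psi\rangle\langle\psi|)$ is at most $\Sigma_n$, the sum of the $n$ largest eigenvalues of $\widehat\cT^k(|\otimes^Mu\rangle\langle\otimes^Mu|)$. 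These targets $\Sigma_n$ are explicit: taking $u=e_1$ by $SU(N)$-invariance and using~(\ref{eq:cloning}), $\widehat\cT^k(|\otimes^Me_1\rangle\langle\otimes^Me_1|)$ is diagonal in the occupation-number basis $\{|n_1,\dots,n_N\rangle:\sum_in_i=M+k\}$ of $\cH_{M+k}$, supported on the states with $n_1\ge M$, with eigenvalue $\tfrac{(M+N-1)!}{(M+k+N-1)!}\,\tfrac{k!\,n_1!}{M!\,(n_1-M)!}$ there, a quantity depending only on $n_1$ and increasing in it; so an extremal rank-$n$ subspace is spanned by the occupation states with largest $n_1$.

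For the main estimate, fix a projector $\Pi$ on $\cH_{M+k}$. Using the second-quantized form of $\cT^k$ and $\cT^k(\one_{\cH_M})=\tfrac{(M+k)!}{M!}\one_{\cH_{M+k}}$ one obtains
\[
\Tr_{M+k}\bigl[\Pi\,(\widehat\cT^k(\rho)-t)\bigr]=\Tr_M\Bigl[(\cT^k)^*(\Pi)\,\Bigl(\tfrac{(M+N-1)!}{(M+k+N-1)!}\,\rho-\tfrac{tM!}{(M+k)!}\,\one_{\cH_M}\Bigr)\Bigr],
\]
where $(\cT^k)^*(\Pi)=\sum_{i_1,\dots,i_k}a_{i_k}\cdots a_{i_1}\,\Pi\,a^*_{i_1}\cdots a^*_{i_k}\ge0$ on $\cH_M$. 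Maximizing over $0\le\Pi\le\one$ recasts the inequality to be proved as: the quantity $\sup_{\sigma}\bigl[\tfrac{(M+N-1)!}{(M+k+N-1)!}\Tr_M[\sigma\rho]-\tfrac{tM!}{(M+k)!}\Tr_M\sigma\bigr]$, with $\sigma$ ranging over the convex set $\{(\cT^k)^*(\Pi):0\le\Pi\le\one\}$, is maximized over density matrices $\rho$ by a coherent state $|\otimes^Mu\rangle\langle\otimes^Mu|$. This is the place where the combinatorial identity~(\ref{eq:chiribella}) does the work done by the $SU(2)$-specific relations in \cite{LS}: it exposes the structure of the operators $(\cT^k)^*(\Pi)$ --- equivalently, it links $\widehat\cT^k$ to the measure-and-prepare channel $\cW_k$ of \cite{Ch}, whose minimal output coincides with that of $\widehat\cT^k$ and whose output is manifestly in coherent-state $P$-form with density proportional to the Husimi function $v\mapsto\langle\otimes^Mv|\rho|\otimes^Mv\rangle$. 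Feeding this into the Berezin--Lieb inequality for the upper (contravariant) symbol reduces the problem to a one-body extremal problem on $\cH_M$ whose solution, by a further Berezin--Lieb/Jensen argument combined with the explicit spectrum above, is the coherent state. I expect the genuine difficulty to lie exactly here --- in verifying that the constants and multiplicities conspire so that the resulting one-body lower bound is \emph{saturated} by $\widehat\cT^k(|\otimes^Mu\rangle\langle\otimes^Mu|)$, rather than being strictly smaller; this is a real $SU(N)$ computation, and it is what the combinatorial identity makes possible.

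For the equality statement, assume $M\ge2$, $k>0$, and $f$ strictly concave. Strict concavity of $\Tr f$ first reduces the analysis to pure $\rho$: if $\rho=\sum_jp_j|\psi_j\rangle\langle\psi_j|$ gave equality, it would force all $\widehat\cT^k(|\psi_j\rangle\langle\psi_j|)$ equal, hence (by injectivity of $\widehat\cT^k$, itself a consequence of the Husimi function of $\widehat\cT^k(\Gamma)$ being proportional to that of $\Gamma$) all $|\psi_j\rangle\langle\psi_j|$ equal. For pure $\rho$, equality forces equality throughout the previous paragraph; strict concavity then forces the Jensen/Berezin--Lieb inequalities there to be equalities, which (the coherent-state overlap kernels being strictly positive definite) forces the Husimi function of $\rho$ to equal $v\mapsto|\langle v|u\rangle|^{2M}$ for some unit $u\in\C^N$, and hence (the lower symbol determining the operator) $\rho=|\otimes^Mu\rangle\langle\otimes^Mu|$. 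The hypotheses $M\ge2$ and $k>0$ are exactly what make the inequalities of the previous paragraph strict away from coherent states: for $k=0$ the channel is the identity, for which every pure state is a minimizer, and the borderline case $M=1$ is not covered by this argument.
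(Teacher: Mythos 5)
Your reformulation of the theorem as a majorization statement (via Karamata / the $(x-t)_+$ decomposition) and your computation of the spectrum of $\widehat\cT^k(|\otimes^Mu\rangle\langle\otimes^Mu|)$ as a function of $n_1$ alone both match the paper. But the core of the proof --- showing that for an \emph{arbitrary} pure $\psi\in\cH_M$ the partial eigenvalue sums of $\widehat\cT^k(|\psi\rangle\langle\psi|)$ are dominated by those of the coherent output --- is not actually carried out. You set up the duality with $(\cT^k)^*(\Pi)$ and then say that (\ref{eq:chiribella}) plus a Berezin--Lieb argument ``reduces the problem to a one-body extremal problem,'' while yourself flagging that the real difficulty is whether the resulting bound is saturated by the coherent state. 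That worry is fatal to the route you propose: the Berezin--Lieb inequalities are a semiclassical tool, used in the paper only in Section~\ref{sec:semiclassics} to pass to the $k\to\infty$ limit, and at finite $k$ they are \emph{not} saturated by coherent states, so they cannot by themselves deliver the finite-$k$ majorization. In other words, the proposal is a plan whose central step is acknowledged to be unverified, and the particular tool you reach for at that step would not close it.

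The mechanism the paper actually uses, and which is absent from your sketch, is an induction on $k$. By the Gram-matrix observation, $\cT^k(|\psi\rangle\langle\psi|)$ on $\cH_{M+k}$ is isospectral (up to zeros) to the measure-and-prepare output $\cW_k(|\psi\rangle\langle\psi|)$ on $\cH_k$; normal ordering gives (\ref{eq:chiribella}), i.e.\ $\cW_k(|\psi\rangle\langle\psi|)=\sum_{\ell=0}^k C_\ell\,\cT^\ell(\gamma^{(k-\ell)}_\psi)$ with $C_\ell>0$. Each term with $\ell\le k-1$ is majorized by its coherent-state counterpart by the induction hypothesis --- which must be stated for general positive semi-definite inputs $\gamma^{(k-\ell)}_\psi$, not just rank-one ones --- and the $\ell=k$ term is a multiple of the identity. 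The coherent-state terms $\cT^\ell(\gamma^{(k-\ell)}_{\otimes^Mu})$ are all non-decreasing functions of the single operator $a^*(u)a(u)$, so the majorizations can be summed (Lemma~\ref{lm:majorsums}); majorization is \emph{not} additive without such a simultaneous-ordering hypothesis, so this verification is essential and nontrivial. None of this appears in your proposal, and without it neither the inequality nor the equality statement (which in the paper is a separate induction using the rank-one reduced density matrix lemma, not an appeal to injectivity of $\widehat\cT^k$) is established.
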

We will prove this by establishing that the sequence of ordered
eigenvalues of $\widehat\cT^k(|\otimes^Mu\rangle\langle \otimes^Mu|)$
majorizes the ordered sequence of eigenvalues of $\widehat\cT^k(\rho)$
for any density matrix $\rho$. The fact that majorizing eigenvalues is
equivalent to minimizing traces of concave functions is not difficult to prove and is known as
Karamata's Theorem (see \cite{LSe}). We used it also in our earlier paper \cite{LS} for $SU(2)$.

Recall that one ordered sequence of numbers $x_1\geq x_2
\geq\cdots\geq x_\mu$ majorizes another, $y_1\geq y_2 \geq\cdots \geq
y_\mu$, if, for each $1\leq k \leq \mu$, $\sum_{j=1}^k x_j \geq
\sum_{j=1}^k y_j$, and with equality for $k=\mu$. If $X$ and $Y$ are
Hermitian matrices we write $X\succ Y$ if the ordered eigenvalue
sequence of $X$ majorizes the ordered eigenvalue sequence of $Y$.  Our
main theorem above is thus a consequence of the following
majorization theorem.

\begin{thm}[{\bf Coherent States Majorize}]\label{thm:majorization} 
  Let $\Gamma$ be a positive semi-definite operator on $\cH_M$. The
  ordered sequence of eigenvalues of $\cT^k(\Gamma)$ is majorized by
  the ordered sequence of eigenvalues of
  $\Tr_M(\Gamma)\cT^k(|\otimes^Mu\rangle\langle \otimes^Mu|)$ for any
  unit vector $u\in \C^N$. Moreover, for $M\geq2$ and $k>0$
  strict majorization holds unless
  $\Gamma=\Tr_M(\Gamma)|\otimes^Mu\rangle\langle \otimes^Mu|$. The norm
  $\Tr_M(\Gamma)$ occurs since we have not assumed $\Gamma$ to have
  unit trace.
\end{thm}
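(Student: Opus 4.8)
The plan is to reduce to the rank-one case $\Gamma=|\psi\rangle\langle\psi|$, realise the pure-input channel output as a ``multiplication operator'' living on the small space $\cH_k$, write down the coherent eigenvalues explicitly, and then prove the resulting majorization on $\cH_k$; that last step is where the combinatorial identity (\ref{eq:chiribella}) enters, replacing $SU(2)$-specific features (notably the factorization of binary forms) used in \cite{LS}. For the reduction: write the spectral decomposition $\Gamma=\sum_j p_j|\psi_j\rangle\langle\psi_j|$ with $p_j>0$, so that $\cT^k(\Gamma)=\sum_j p_j\,\cT^k(|\psi_j\rangle\langle\psi_j|)$ by linearity. The functional ``sum of the $n$ largest eigenvalues'' is convex and positively homogeneous on Hermitian matrices, and $\Tr_{M+k}\cT^k(|\psi\rangle\langle\psi|)=\tfrac{(M+k+N-1)!}{(M+N-1)!}$ is independent of the unit vector $\psi$; hence the theorem follows once one shows
\[
\cT^k(|\psi\rangle\langle\psi|)\ \prec\ \cT^k(|\otimes^Mu\rangle\langle\otimes^Mu|)\qquad\text{for every unit }\psi\in\cH_M .
\]

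For the small model, embed $\cH_{M+k}$ as the Cartan summand of $\cH_M\otimes\cH_k$ by the isometry $\tilde V$ determined by $\tilde V(\otimes^{M+k}v)=(\otimes^{M}v)\otimes(\otimes^{k}v)$. Using (\ref{eq:cloning}) and evaluating on coherent states, one gets $\cT^k(|\psi\rangle\langle\psi|)=\tfrac{(M+k)!}{M!}\,L_\psi^{*}L_\psi$ with $L_\psi:=(\langle\psi|\otimes\one_{\cH_k})\,\tilde V\colon\cH_{M+k}\to\cH_k$; equivalently, in the realisation of $\cH_m$ as the homogeneous degree-$m$ polynomials in $N$ variables, $L_\psi^{*}$ is (a constant times) multiplication by the polynomial representing $\psi$. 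Thus the nonzero eigenvalues of $\cT^k(|\psi\rangle\langle\psi|)$ equal $\tfrac{(M+k)!}{M!}$ times those of $L_\psi L_\psi^{*}$, an operator on the $\binom{k+N-1}{N-1}$-dimensional space $\cH_k$, and the desired majorization is equivalent to $L_\psi L_\psi^{*}\prec L_{\otimes^Mu}L_{\otimes^Mu}^{*}$ on $\cH_k$. For $\psi=\otimes^Mu$, say $u=e_1$, the operator $L_{\otimes^Mu}^{*}$ is (a constant times) multiplication by $z_1^{M}$, so $L_{\otimes^Mu}L_{\otimes^Mu}^{*}$ is diagonal in the occupation-number basis of $\cH_k$; this yields the explicit coherent list --- eigenvalues $k!\binom{M+m}{m}$ for $m=0,1,\dots,k$ with multiplicity $\binom{k-m+N-2}{N-2}$, the largest being $k!\binom{M+k}{k}=\tfrac{(M+k)!}{M!}$, simple, with eigenvector $\otimes^{M+k}u$.

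The \emph{crux} is the majorization $L_\psi L_\psi^{*}\prec L_{\otimes^Mu}L_{\otimes^Mu}^{*}$ on $\cH_k$. The traces match by the first paragraph, so only the partial sums are at issue; and for each $n$ the $n$-th partial sum of $L_\psi L_\psi^{*}$ equals $\sup_R\langle\psi|\,\Tr_{\cH_k}\!\big((\one_{\cH_M}\otimes R)\,\tilde V\tilde V^{*}\big)|\psi\rangle$, the supremum over rank-$n$ projections $R$ on $\cH_k$. Hence one must show that for every such $R$ the positive operator $\Tr_{\cH_k}\!\big((\one_{\cH_M}\otimes R)\,\tilde V\tilde V^{*}\big)$ on $\cH_M$ has norm at most the $n$-th partial sum of the coherent list above. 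When $N=2$ the polynomial representing $\psi$ factors into $M$ linear forms, $L_\psi^{*}$ becomes a composition of $M$ single multiplications each $SU(2)$-equivalent to multiplication by $z_1$, and the comparison reduces to singular-value inequalities under composition --- the $SU(2)$-specific argument of \cite{LS}. For $N\ge3$ this factorization is gone, and the combinatorial identity (\ref{eq:chiribella}) is used instead: it rewrites the operator in question --- equivalently, it ties $\cT^k$ to the measure-and-prepare channel $\cW_k$, whose output spectrum is governed by integrals $\int_{\C\P^{N-1}}\langle\otimes^Mv|\rho|\otimes^Mv\rangle\,\big|\langle\phi\,|\,\otimes^{M+k}v\rangle\big|^{2}\,{\rm d}v$ of Husimi functions against coherent overlaps --- in a form where the coherent $\psi$ is visibly optimal, via a rearrangement-type estimate on $\C\P^{N-1}$. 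I expect this to be the main obstacle: pushing the $SU(2)$ singular-value comparison through for general $N$ with only the identity at hand, and controlling all partial sums at once rather than just the top eigenvalue.

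For the equality statement ($M\ge2$, $k>0$, $f$ strictly concave), one tracks the equality cases in the above chain. Already at $n=1$ one has $\|\cT^k(|\psi\rangle\langle\psi|)\|\le\tfrac{(M+k)!}{M!}$ (from $\sum_{i_1,\dots,i_k}a^*_{i_1}\cdots a^*_{i_k}a_{i_k}\cdots a_{i_1}=\tfrac{(M+k)!}{M!}$ on $\cH_{M+k}$ together with Cauchy--Schwarz), with equality forcing $a_{i_k}\cdots a_{i_1}\phi$ to be parallel to $\psi$ for all indices (with $\phi$ a norm-attaining vector), hence $\psi=\otimes^Mw$ for some unit $w$; and since the top eigenvalue of each $\cT^k(|\otimes^Mw_j\rangle\langle\otimes^Mw_j|)$ is simple with eigenvector $\otimes^{M+k}w_j$, a nontrivial convex combination of them cannot retain that top value unless all $w_j$ coincide. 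So equality throughout forces $\Gamma=\Tr_M(\Gamma)\,|\otimes^Mw\rangle\langle\otimes^Mw|$, and otherwise the majorization is strict.
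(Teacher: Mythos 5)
Your setup is sound and matches the paper's: the reduction to rank-one $\Gamma$ via convexity of partial eigenvalue sums, and the passage from $\cT^k(|\psi\rangle\langle\psi|)$ on $\cH_{M+k}$ to an operator on the small space $\cH_k$ with the same nonzero spectrum (your $L_\psi L_\psi^{*}$ is, up to normalization, the Gram-matrix operator $\cW_k(|\psi\rangle\langle\psi|)$ of (\ref{eq:Wdef})), together with the explicit coherent spectrum. But at the step you yourself flag as the crux --- proving $L_\psi L_\psi^{*}\prec L_{\otimes^Mu}L_{\otimes^Mu}^{*}$ on $\cH_k$ --- you do not give an argument: you gesture at ``a rearrangement-type estimate on $\C\P^{N-1}$'' and state that you expect this to be the main obstacle. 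That is a genuine gap, and the route you hint at is not the one that works. The actual mechanism is an \emph{induction on $k$} combined with two facts you never invoke: (i) normal-ordering gives the identity (\ref{eq:chiribella}), $\cW_k(|\psi\rangle\langle\psi|)=\sum_{\ell=0}^{k}C_\ell\,\cT^\ell(\gamma^{(k-\ell)}_\psi)$ with \emph{positive} constants $C_\ell$, expressing the operator on $\cH_k$ as a positive combination of lower-order channels applied to the reduced density matrices of $\psi$, so the induction hypothesis gives $\cT^\ell(\gamma^{(k-\ell)}_{\otimes^Mu})\succ\cT^\ell(\gamma^{(k-\ell)}_\psi)$ term by term for $\ell\le k-1$ (the $\ell=k$ term being a multiple of the identity on both sides); and (ii) termwise majorization cannot in general be summed, so one needs the additional observation that all the coherent-state majorants $\cT^\ell(\gamma^{(k-\ell)}_{\otimes^Mu})$ are increasing functions of the single operator $a^{*}(u)a(u)$, i.e.\ simultaneously diagonal with co-monotone eigenvalue lists, which is exactly the hypothesis of Lemma~\ref{lm:majorsums}. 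Without (ii) the conclusion $\sum_\ell C_\ell A_\ell\succ\sum_\ell C_\ell B_\ell$ is simply false for general Hermitian $A_\ell\succ B_\ell$, so this commutation/co-monotonicity property is not a technicality but the heart of why the proof closes.

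Two smaller points. First, your description of (\ref{eq:chiribella}) as tying $\cT^k$ to the measure-and-prepare channel via integrals of Husimi functions over $\C\P^{N-1}$ points toward \cite{Ch}, but the proof needs only the positivity of the coefficients $C_\ell$ and the operator form of the identity, not any integral representation or rearrangement inequality on projective space. Second, your uniqueness argument via the top eigenvalue alone ($n=1$) is a different route from the paper's (which runs a second induction showing that equality forces a reduced density matrix $\gamma^{(k-\ell)}_\psi$ with $0<k-\ell<M$ to be rank one, whence $\psi$ is coherent); your version is plausible but the key step --- that $a_{i_k}\cdots a_{i_1}\phi\parallel\psi$ for all index tuples forces $\psi=\otimes^{M}w$ --- is asserted rather than proved, and would need an argument (e.g.\ in the polynomial realization, that a homogeneous form all of whose $k$-th partials are proportional to one another is a power of a linear form).
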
 
\begin{proof}
  We will use induction on $k$. The case $k=0$ is trivial as
  $\cT^0(\Gamma)=\Gamma$ and any positive semidefinite operator is
  majorized by a rank 1 operator with the same trace.

  More generally, we may assume that $\Gamma$ is a rank one
  projection. The assumption that $\Tr_M(\Gamma)=1$ is, trivially, no
  loss of generality. That the rank may be assumed to be one follows
  from the fact that if $A$ and $B$ are both majorized by $C$ then for
  $0\leq \lambda\leq 1$ we have that $\lambda A+(1-\lambda)B$ is
  majorized by $C$. Alternatively, we have that the partial sum of
  eigenvalues is a convex function.  Now simply write the spectral
  decomposition of $\Gamma$, i.e.,
  $\Gamma=\sum_{p=1}^r\lambda_p|\psi_p\rangle\langle \psi_p|$, where
  $0\leq \lambda_p\leq 1$ with $\sum_{p=1}^r\lambda_p=1$. If we can
  show that each $\cT^k(|\psi_p\rangle\langle \psi_p|)$ is majorized
  as claimed then the result follows for $\Gamma$.

  We will also use the following well known, simple observation. 
  \begin{lm} If $v_1,\ldots,v_m$ are vectors in a Hilbert space $\cH$
    then the operator $\sum_{i=1}^m|v_i\rangle\langle v_i|$ has the
    same non-zero eigenvalues as the $m\times m$ Gram matrix with entries
$\langle v_i| v_j\rangle$.
  \end{lm}
\begin{proof} Let $A:\C^m\to \cH$ be the linear map
$$
(z_1,\ldots,z_m)\mapsto z_1|v_1\rangle+\ldots+z_m|v_m\rangle.
$$
Its adjoint is the map $A^*:\cH\to \C^m$ given by 
$$
A^*|v\rangle=(\langle v_1|v\rangle,\ldots, \langle v_m|v\rangle).
$$
Then $AA^*=\sum_{i=1}^m|v_i\rangle\langle v_i|$ is an operator from
$\cH$ to itself and $A^*A$ is the linear map on $\C^m$ corresponding
to the Gram matrix. The non-zero eigenvalues of $AA^*$ are always the same
as those of $A^*A$, for any $A$.
\end{proof}
We assume now that the main theorem has been proved for all values
$0,1,2,...,k-1$. As explained above we may assume that
$\Gamma=|\psi\rangle\langle\psi|$, where $\psi$ is a unit vector in
$\cH_M$. According to the lemma $\cT^k(|\psi\rangle\langle\psi|)$ has the
same
non-zero eigenvalues as the matrix 
$$
W^\psi_{j_1,\dots, j_k;\, i_1,\dots,i_k} =
\langle\psi|a_{i_1}\cdots a_{i_k}a^*_{j_k}\cdots a^*_{j_1}|\psi\rangle.
$$
This matrix represents the operator on the space
$\cH_k$, given in second quantization by
\begin{equation}\label{eq:Wdef}
  \cW_k(|\psi\rangle\langle\psi|)=\frac1{k!}\sum_{i_1=1}^N\cdots\sum_{i_k=1}^N\sum_{j_1=1}^N\cdots\sum_
  { j_k=1}^N
  \langle\psi|a_{i_1}\cdots a_{i_k}a^*_{j_k}\cdots a^*_{j_1}|\psi\rangle
  a^*_{i_1}\cdots
  a^*_{i_k}a_{j_k}\cdots a_{j_1}.
\end{equation}
The map $\cW_k$ from operators on $\cH_M$ to operators on $\cH_k$ is again a completely positive map, i.e., 
if we normailze it to be trace preserving the resulting map $\widehat\cW_k$ would be a quantum channel.
We thus want to prove that 
\begin{equation}\label{eq:Wmajorization}
\cW_k(|\otimes^Mu\rangle\langle\otimes^Mu|)\succ \cW_k(|\psi\rangle\langle\psi|).
\end{equation}

By normal ordering the creation and annihilation operators (which
means utilizing the commutation relations to switch the creation
operators to the left of the annihilation operators), inside the
expectation value, we can express $\cW_k(|\psi\rangle\langle\psi|)$ in terms of the reduced
density matrices. These are the operators $\gamma^\ell_\psi$,
$\ell=0,\ldots$ defined on $\cH_\ell$ by
$$
\gamma^\ell_\psi=\frac1{\ell!}\sum_{i_1=1}^N\cdots\sum_{i_\ell=1}^N\sum_{
j_1=1}^N\cdots\sum_{j_\ell=1}^N
\langle\psi|a^*_{i_1}\dots a^*_{i_\ell}a_{j_\ell}\cdots
a_{j_1}|\psi\rangle a^*_{j_\ell}\cdots a^*_{j_1}a_{i_1}\dots a_{i_\ell},
$$
with the normalization convention $\Tr
\gamma^{(k-\ell)}_\psi=\frac{M!}{(M-k+\ell)!}$ (they vanish if
$k-\ell>M$).  In fact, as we commute creation operators $a_i^*$ to
the left of annihilation operators $a_j$, we will create delta
functions $\delta_{ij}$ with positive coefficients, and it is thus
evident that there will be positive constants $C_\ell$,
$\ell=0,\ldots,k$ (the exact values are not important to us) such that
\begin{equation}\label{eq:chiribella}
\cW_k(|\psi\rangle\langle\psi|)= \sum_{\ell=0}^kC_\ell \sum_{i_1=1}^N \cdots\sum_{i_\ell=1}^N
a^*_{i_1}\cdots
a^*_{i_\ell}\gamma^{(k-\ell)}_\psi a_{i_\ell}\cdots
a_{i_1}=\sum_{\ell=0}^kC_\ell \cT^\ell(\gamma^{(k-\ell)}_\psi ).
\end{equation}
The explicit constants in this formula were derived in \cite{Ch} where
$\widehat\cW_k$ was called the `universal
measure-and-prepare' channel.
See also \cite{LNR} and \cite{Ha} (Theorem 7) for alternative
calculations of the constants, which we recall are not important for
our application of the formula.

From the induction hypothesis we see that for all $\ell\leq k-1$
$$
\cT^\ell(\gamma^{(k-\ell)}_{\otimes^Mu} )\succ
\cT^\ell(\gamma^{(k-\ell)}_\psi ).
$$
For $\ell=k$, however, this is trivial since $\gamma^{0}_{\otimes^Mu} = \gamma^{0}_\psi =1$ and thus 
$$
\cT^k(\gamma^{0}_{\otimes^Mu} )=\cT^k(\gamma^{0}_\psi )=\sum_{i_1=1}^N
\cdots\sum_{i_k=1}^N a^*_{i_1}\cdots
a^*_{i_k}a_{i_k}\cdots
a_{i_1}=k!I_{\cH_k}.
$$
We now use the following simple observation. 

\begin{lm}\label{lm:majorsums} Consider Hermitean operators $A_1,\ldots,A_k$ that can
be  diagonalized in the same basis and such that
the eigenvalues are simultaneously ordered decreasingly. In other
words the operators are all non-decreasing functions of the same
operator, i.e.,  $A_i=f_i(A)$, $i=1,2,\ldots,k$ where $A$ is Hermitean
and $f_1,f_2,\ldots, f_k:\R\to\R$ are non-decreasing. 
If the Hermitean operators $B_1,\ldots,B_k$ satisfy 
$A_1\succ B_1,\ldots,A_k\succ B_k$ then 
$$
A_1+\ldots+A_k\succ B_1+\ldots+ B_k.
$$
\end{lm}
\begin{proof} To see this assume that $u_1,\ldots, u_q$ are the first
  $q$ eigenvectors in the basis diagonalizing $A$ and hence
  $A_1,\ldots,A_k$. If $v_1,\ldots,v_q$ are orthonormal eigenvectors
  for $B_1+\ldots+ B_k$ corresponding to the top $q$ eigenvalues
  $\mu_1,\ldots,\mu_q$ then (by the min-max principle)
\begin{eqnarray*}
\mu_1+\ldots+\mu_q=\sum_{j=1}^q\langle v_j|B_1+\ldots+
B_k|v_j\rangle&=&\sum_{j=1}^q\langle v_j|B_1|v_j\rangle+\ldots+
\sum_{j=1}^q\langle v_j| B_k|v_j\rangle\\&\leq&\sum_{j=1}^q\langle
u_j|A_1|u_j\rangle+\ldots+
\sum_{j=1}^q\langle u_j| A_k|u_j\rangle\\&=&\sum_{j=1}^q\langle
u_j|A_1+\ldots+
A_k|u_j\rangle =\nu_1+\ldots+ \nu_q,
\end{eqnarray*}
where $\nu_1,\ldots,\nu_q$ are the top $q$ eigenvalues of
$A_1+\ldots+A_k$. This proves the lemma. 
\end{proof}

To finish the proof of the majorization in Theorem~\ref{thm:majorization} we now show that the operators
$A_\ell=\cT^\ell(\gamma^{(k-\ell)}_{\otimes^Mu} )$ indeed satisfy
the simultaneously diagonalization property, i.e., are monotone
functions of the same operator. 
On $\cH_{k-\ell}$ we have, in terms of second quantization,
$$\gamma^{(k-\ell)}_{\otimes^Mu}=
\gamma^{(k-\ell)}_{\otimes^Mu}
=\frac{M!}{(M-k+\ell)!}|\otimes^{k-\ell}u\rangle\langle\otimes^{k-\ell}u|
=\frac{M!}{(M-k+\ell)!(k-\ell)!}a^*(u)^{(k-\ell)} a(u)^{(k-\ell)}.
$$
It follows that if $v_2,\ldots,v_N$ are chosen so that they form an
orthonormal basis
of $\C^N$ together with $u$ then on $\cH_k$ we have
\begin{eqnarray*}
\lefteqn{\cT^\ell(a^*(u)^{(k-\ell)} a(u)^{(k-\ell)})=}&&\\&&\sum_{j=0}^\ell
{\ell\choose j}a^*(u)^{(k-\ell+j)} a(u)^{(k-\ell+j)}
\sum_{i_1=2}^N\ldots\sum_{i_{\ell-j}=2}^Na^*(v_{i_1})\cdots
a^*(v_{i_{\ell-j}})a(v_{i_{\ell-j}})\cdots a(v_{i_1})
\\&=&
\sum_{j=0}^\ell
\frac{\ell!(k-\ell+j)!}{j!}{\bf 1}_{a^*(u)a(u)=k-\ell+j}.
\end{eqnarray*}

This simply says that 
 
$$
\cT^\ell(a^*(u)^{(k-\ell)} a(u)^{(k-\ell)})=f_\ell(a^*(u)a(u)),
$$
where
$$
f_\ell(m)=
\begin{cases}
\frac{\ell!m!}{(m-(k-\ell))!},  &\text{if $m\geq k-l$;}\\
0. &\text{if $m<k-l$,}
\end{cases}
$$
i.e., they are increasing functions. 
Hence using Lemma~\ref{lm:majorsums} and (\ref{eq:chiribella}) we find that 
\begin{equation}
\label{eq:Wcomp}
\cW_k(|\otimes^Mu\rangle\langle\otimes^Mu|)=\sum_{\ell=0}^kC_\ell \cT^\ell(\gamma^{(k-\ell)}_{\otimes^Mu} )\succ
\sum_{\ell=0}^kC_\ell \cT^\ell(\gamma^{(k-\ell)}_\psi )=\cW_k(|\psi\rangle\langle\psi|).
\end{equation}
which shows (\ref{eq:Wmajorization}).

To show that majorization is strict for $k>0$ unless $\psi$ is a
coherent state vector $\otimes^M u$ we shall consider $0<k\leq
q(M-1)$ and do induction on $q=1,2\ldots$.  For $q=1$ we are
assuming that $0<k\leq M-1$.  If the eigenvalues of $\cW_k(|\psi\rangle\langle\psi|)$ are equal
to the eigenvalues of $\cW_k(|\psi\rangle\langle\psi|)$ we conclude from (\ref{eq:Wcomp}), in
particular, that the eigenvalues of $\cT^0(\gamma^{k}_\psi
)=\gamma^{k}_\psi $ are the same as the eigenvalues of
$\cT^0(\gamma^{k}_{\otimes^Mu})=\gamma^{k}_{\otimes^Mu}$ which is
rank 1. Hence $\gamma^{k}_\psi$ is rank one. The result now follows
from the following lemma.
\begin{lm} 
  If, for $0<k<M$, the $k$-particle reduced density matrix
  $\gamma^{(k)}_\psi$ for a $\psi\in\cH_M$ is rank one then $\psi$
  must be a coherent state vector $\otimes^Mu$ for some unit vector
  $u$.
\end{lm}
\begin{proof} Since $\psi\in \bigotimes^M\C^N$ we can think of
  $\gamma^{k}_\psi$ (up to an overall multiplicative factor) as the
  partial trace of $|\psi\rangle\langle\psi|$ over the first $M-k$
  factors. If this is rank one so is $\gamma^{M-k}_\psi$ and moreover
  $|\psi\rangle\langle\psi|$ must be a tensor product of these two
  rank one operators, i.e., there is $\psi_{M-k}\in\cH_{M-k}$ and
  $\psi_k\in \cH_k$ such that $\psi_M=\psi_{M-k}\otimes\psi_k$.  By
  taking repeated partial traces we easily see that
  $|\psi\rangle\langle\psi|=M^{-M}\bigotimes^{M}\gamma_\psi^{1}$,
  which implies that $\gamma_\psi^{1}$ must be rank one as claimed. 
\end{proof}
To do the induction step we assume that we have proved the claim for
all $k\leq(q-1)(M-1)$ we consider $k\leq q(M-1)$. In (\ref{eq:Wcomp})
all terms in the sum on the left must have the same eigenvalues as the
terms in the sum on the right. If we consider the term for
$l=(q-1)(M-1)$ we see that $\cT^{\ell}(\gamma_\psi^{k-\ell})$ has the
same eigenvalues as $\cT^{\ell}(\gamma_{\otimes^Mu}^{k-\ell})$. Since
$\gamma_{\otimes^Mu}^{k-\ell}$ is proportional to
$|\otimes^{(k-\ell)}u\rangle\langle\otimes^{(k-\ell)}u|$ it follows
from the induction assumption that $\gamma_\psi^{(k-\ell)}$ has to be
proportional to a coherent state, in particular, rank one.  Note that
$k-\ell\leq q(M-1)-(q-1)(M-1)=M-1$ and the result again follows from
the lemma above.
\end{proof}
One of the key observations in our proof was that the two maps $\cW_k$
and $\cT_k$ acting on the same pure states will output operators with
the same eigenvalues. In particular this implies that we have also
determined the minimal output entropy of the channel $\widehat\cW_k$
defined as the trace preserving normalization of the map $\cW_k$ given
in (\ref{eq:Wdef}).
\begin{cl}[Minimal Output Entropy of $\widehat\cW_k$]\label{cl:Wk} For any concave function
  $f:[0,1]\to\R$ and any density matrix $\rho$ on $\cH_M$ we have
$$
\Tr_{M+k}f(\widehat\cW^k(\rho))\geq \Tr_{k}f(\widehat\cW^k(|\otimes^Mu\rangle\langle \otimes^Mu|)).
$$
For $M\geq 2$ and $k>0$ equality holds if and only if
$\rho=|\otimes^Mu\rangle\langle \otimes^Mu|$ for some unit vector
$u\in\C^N$.
\end{cl}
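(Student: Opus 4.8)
The plan is to carry out, for the channel $\cW_k$, the argument parallel to the proof of Theorem~\ref{thm:main}; this transfers because, as noted above, $\cW_k$ and $\cT^k$ produce operators with the same non-zero eigenvalues on any pure state. First I would dispose of the normalization. Since $\widehat\cW^k$ is the trace-preserving rescaling of $\cW_k$ and $\Tr_k\cW_k(\rho)$ depends only on $\Tr_M\rho=1$ (as one sees from~(\ref{eq:chiribella}) together with the trace formula for $\cT^\ell$), one has $\widehat\cW^k=c\,\cW_k$ with a fixed $c>0$, so by Karamata's theorem it suffices to prove the majorization
\[
\cW_k(|\otimes^Mu\rangle\langle\otimes^Mu|)\succ\cW_k(\rho)
\]
for every density matrix $\rho$ on $\cH_M$; here, as in the statement of the corollary, the trace is taken over the target space $\cH_k$ of $\cW_k$. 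To obtain this I would extend~(\ref{eq:chiribella}) by linearity, writing $\cW_k(\rho)=\sum_{\ell=0}^kC_\ell\,\cT^\ell(\gamma^{(k-\ell)}_\rho)$ and likewise for $|\otimes^Mu\rangle\langle\otimes^Mu|$, with all $C_\ell>0$. Applying Theorem~\ref{thm:majorization} on $\cH_{k-\ell}$ to the positive operator $\gamma^{(k-\ell)}_\rho$, and using $\Tr\gamma^{(k-\ell)}_\rho=\Tr\gamma^{(k-\ell)}_{\otimes^Mu}$, gives $\cT^\ell(\gamma^{(k-\ell)}_{\otimes^Mu})\succ\cT^\ell(\gamma^{(k-\ell)}_\rho)$ for each $\ell$. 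Moreover the operators $\cT^\ell(\gamma^{(k-\ell)}_{\otimes^Mu})$, $\ell=0,\dots,k$, were shown inside the proof of Theorem~\ref{thm:majorization} to be monotone functions of the single operator $a^*(u)a(u)$ on $\cH_k$, hence simultaneously diagonalizable with simultaneously ordered spectra. Lemma~\ref{lm:majorsums} then yields the displayed majorization, and Karamata's theorem converts it into the claimed inequality for every concave $f:[0,1]\to\R$.

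For the equality statement I would also assume, as in Theorem~\ref{thm:main}, that $f$ is strictly concave, and that $M\geq2$, $k>0$. If equality holds, the strict form of Karamata's theorem forces $\cW_k(\rho)$ and $\cW_k(|\otimes^Mu\rangle\langle\otimes^Mu|)$ to have the same ordered eigenvalue sequence; then the equality analysis in the proof of Lemma~\ref{lm:majorsums}, using the simultaneous ordered diagonalizability just mentioned, forces $\cT^\ell(\gamma^{(k-\ell)}_\rho)$ and $\cT^\ell(\gamma^{(k-\ell)}_{\otimes^Mu})$ to be isospectral for every $\ell$. I would then specialize to $\ell=k-1$, where $\gamma^{(1)}_\rho$ is a one-particle reduced density matrix and $\gamma^{(1)}_{\otimes^Mu}$ is a positive multiple of $|u\rangle\langle u|$. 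Since, by the explicit formula in the proof of Theorem~\ref{thm:majorization}, $\cT^{k-1}$ of a rank-one projection $|w\rangle\langle w|$ is a strictly increasing function of $a^*(w)a(w)$ on $\cH_k$ whose largest eigenvalue is simple with eigenvector $\otimes^kw$, comparing operator norms in the spectral decomposition $\gamma^{(1)}_\rho=\sum_r\mu_r|e_r\rangle\langle e_r|$ shows that $\gamma^{(1)}_\rho$ must be rank one (two distinct eigendirections could not share a maximizing vector). The rank-one lemma in the proof of Theorem~\ref{thm:majorization}, applicable since $0<1<M$ and extended to density matrices in the obvious way through the spectral decomposition of $\rho$, then forces $\rho=|\otimes^Mu\rangle\langle\otimes^Mu|$ for some unit $u\in\C^N$.

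I expect the last step to be the main obstacle: extracting from the mere coincidence of the eigenvalue sequences of the two channel outputs the rigid conclusion that $\rho$ is a single coherent state, rather than, say, a mixture of coherent states. This forces one to take the formula~(\ref{eq:chiribella}) apart term by term and, because the strict-majorization half of Theorem~\ref{thm:majorization} is unavailable at the one-particle level ($M=1$), to insert the extra operator-norm comparison sketched above. The remaining steps are a routine transcription of those already carried out for the channel $\cT^k$.
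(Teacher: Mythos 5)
Your proof is correct. For the inequality half it is essentially the paper's argument: the corollary is obtained there as an immediate by-product of the proof of Theorem~\ref{thm:majorization}, where the majorization $\cW_k(|\otimes^Mu\rangle\langle\otimes^Mu|)\succ\cW_k(|\psi\rangle\langle\psi|)$ is exactly the intermediate step (\ref{eq:Wmajorization}); your termwise application of Theorem~\ref{thm:majorization} to the reduced density matrices $\gamma^{(k-\ell)}_\rho$ in the decomposition (\ref{eq:chiribella}), followed by Lemma~\ref{lm:majorsums} and Karamata, is the same computation, merely run directly on mixed $\rho$ instead of reducing to pure states first. (You are also right that strict concavity of $f$ must be assumed for the equality statement, as in Theorem~\ref{thm:main}, and that the subscript $M+k$ on the left-hand trace should be $k$.) Where you genuinely depart from the paper is the uniqueness analysis. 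The paper extracts rank-one-ness of $\gamma^{(k)}_\psi$ from the $\ell=0$ term when $k<M$ and then runs a secondary induction on $q$ with $k\le q(M-1)$ to handle large $k$. You instead isolate the $\ell=k-1$ term, note that $\cT^{k-1}(\gamma^{(1)})$ is $(k-1)!$ times the second quantization of $\gamma^{(1)}$ restricted to $\cH_k$, and compare top eigenvalues: isospectrality forces $\|\gamma^{(1)}_\rho\|=M=\Tr\gamma^{(1)}_\rho$, hence $\gamma^{(1)}_\rho$ is rank one, after which the rank-one lemma (with $1<M$) finishes; the extension to mixed $\rho$ through its spectral decomposition is as routine as you claim. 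This route is cleaner than the paper's: it works uniformly in $k>0$ for $M\ge2$ and dispenses with the induction on $q$ entirely. The only hypothesis you should make explicit is that $C_{k-1}>0$, so that the $\ell=k-1$ term genuinely appears in (\ref{eq:chiribella}); this is guaranteed by the positivity of all the normal-ordering coefficients asserted there.
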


\section{The Semiclassical Limit}\label{sec:semiclassics}
In this section we study the limit of the maps $\cT^k$ as $k$ tends to
infinity with the goal of proving the generalized Wehrl inequality
Theorem~\ref{thm:wehrl}.  The limit $k\to\infty$ will turn out to be a
semiclassical limit, where the limiting object is a map from
operators $\Gamma$ on $\cH_M$ to functions on a classical phase
space. As we have discussed the classical phase space is the space of
pure quantum states on the one-body space $\C^N$, i.e., the complex
projective space $\C\P^{N-1}$ of unit vectors in $\C^N$. This space
will not play a role in our analysis as we will use (\ref{eq:su-int}) and
simply work on $\{u\in\C\ :\ |u|=1\}$, i.e., a $(2N-1)$-dimensional real sphere.

Since $SU(N)$ acts irreducibly on the symmetric space $\cH_M$ (see
appendix~\ref{sec:appendix}) we have the usual coherent states
decomposition on $\cH_M$.
\begin{equation}\label{eq:decomp}
\dim(\cH_M)\int_{u\in \C\P^{N-1}}|\otimes^M u\rangle\langle \otimes^Mu|
d_{\C\P^{N-1}}u=I_{\cH_M}.
\end{equation}
That the operator on the right is proportional to the identity follows
from Schur's Lemma and the identity then follows from the fact that
both sides have the same trace.  Recall, that the measure on the sphere 
is assumed to be normalized.

Using the Berezin-Lieb inequality \cite{Be,L1}, which states that
traces of concave functions are bounded above and below by analogous
semiclassical expressions, we can compare the finite dimensional
traces to integrals. 
\begin{lm}\label{lm:BL} If $f$ is a concave function and $\Gamma$ is
  a positive semi-definite operator on $\cH_M$ we have
\begin{equation}\label{eq:BL}
\frac1{\dim\cH_{M+k}}\Tr_{\cH_{M+k}}\left[f\left(\frac{M!}{(M+k)!}
\cT^k(\Gamma)\right)\right]
\leq \int_{\{u\in\C^N\ :\ |u|=1\}}f\left(\langle \otimes^Mu|\Gamma|
\otimes^Mu\rangle\right)du.
\end{equation}
\end{lm}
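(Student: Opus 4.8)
The plan is to recognize Lemma~\ref{lm:BL} as the Berezin--Lieb upper bound applied to the operator $A:=\frac{M!}{(M+k)!}\cT^k(\Gamma)$ on $\cH_{M+k}$, using the coherent-state resolution of the identity (\ref{eq:decomp}) on that space. Two things have to be checked: first, that the lower (covariant) symbol of $A$ relative to the coherent states $|\otimes^{M+k}u\rangle$ is precisely the function $u\mapsto\langle\otimes^M u|\Gamma|\otimes^M u\rangle$ appearing on the right of (\ref{eq:BL}); and second, the abstract inequality relating $\frac1{\dim\cH_{M+k}}\Tr f(A)$ to the integral of $f$ applied to that symbol. Neither step is difficult; the statement is quoted as a lemma precisely because it is a routine instance of a known inequality.

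For the symbol computation I would use the form (\ref{eq:cloning}), namely $\cT^k(\Gamma)=\frac{(M+k)!}{M!}P_{M+k}\big((\otimes^k I_{\C^N})\otimes\Gamma\big)P_{M+k}$. Since $|\otimes^{M+k}u\rangle\in\cH_{M+k}$ is fixed by $P_{M+k}$, and as an element of $\bigotimes^{M+k}\C^N$ it factors as $(\otimes^k u)\otimes(\otimes^M u)$, one gets $\langle\otimes^{M+k}u|\cT^k(\Gamma)|\otimes^{M+k}u\rangle=\frac{(M+k)!}{M!}\langle\otimes^k u|\otimes^k u\rangle\,\langle\otimes^M u|\Gamma|\otimes^M u\rangle=\frac{(M+k)!}{M!}\langle\otimes^M u|\Gamma|\otimes^M u\rangle$, using $|u|=1$. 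Multiplying by $\frac{M!}{(M+k)!}$ shows $\langle\otimes^{M+k}u|A|\otimes^{M+k}u\rangle=\langle\otimes^M u|\Gamma|\otimes^M u\rangle$.

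For the inequality, diagonalize $A=\sum_j\lambda_j|e_j\rangle\langle e_j|$. For a fixed unit vector $u$ the numbers $p_j(u):=|\langle\otimes^{M+k}u|e_j\rangle|^2$ are nonnegative and sum to $\langle\otimes^{M+k}u|\otimes^{M+k}u\rangle=1$, so $\{p_j(u)\}_j$ is a probability distribution and $\langle\otimes^{M+k}u|A|\otimes^{M+k}u\rangle=\sum_j p_j(u)\lambda_j$. Concavity of $f$ (Jensen) gives $f\big(\sum_j p_j(u)\lambda_j\big)\ge\sum_j p_j(u)f(\lambda_j)$. Integrating over $u$ against the normalized measure and using (\ref{eq:decomp}) (equivalently (\ref{eq:su-int})) to evaluate $\int p_j(u)\,du=\langle e_j|\big(\int|\otimes^{M+k}u\rangle\langle\otimes^{M+k}u|\,du\big)|e_j\rangle=1/\dim\cH_{M+k}$ yields $\int f(\langle\otimes^M u|\Gamma|\otimes^M u\rangle)\,du\ge\frac1{\dim\cH_{M+k}}\sum_j f(\lambda_j)=\frac1{\dim\cH_{M+k}}\Tr_{\cH_{M+k}}f(A)$, which is exactly (\ref{eq:BL}).

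The only point needing a word of care --- and the closest thing to an obstacle --- is that $f$ is a priori defined only on $[0,1]$, so one should check that all the arguments of $f$ lie there. For a density matrix this is immediate: $\langle\otimes^M u|\Gamma|\otimes^M u\rangle\le\|\Gamma\|\le1$, and from (\ref{eq:cloning}), since $P_{M+k}$ is a projection and $\|(\otimes^k I_{\C^N})\otimes\Gamma\|=\|\Gamma\|\le1$, one has $\|A\|=\frac{M!}{(M+k)!}\|\cT^k(\Gamma)\|\le1$, so the convex combinations $\sum_j p_j(u)\lambda_j$ also lie in $[0,1]$. For a general positive semi-definite $\Gamma$ one simply reads $f$ as a concave function on an interval large enough to contain the spectrum of $A$ and the values $\langle\otimes^M u|\Gamma|\otimes^M u\rangle$ (any concave function on $[0,1]$ extends to such an interval, e.g.\ linearly); the argument above is otherwise unchanged.
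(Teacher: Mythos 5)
Your proposal is correct and follows essentially the same route as the paper: the upper Berezin--Lieb bound obtained from the resolution of the identity (\ref{eq:decomp}) together with Jensen's inequality applied to the spectral decomposition of $\frac{M!}{(M+k)!}\cT^k(\Gamma)$, combined with the lower-symbol identity $\langle\otimes^{M+k}u|\cT^k(\Gamma)|\otimes^{M+k}u\rangle=\frac{(M+k)!}{M!}\langle\otimes^Mu|\Gamma|\otimes^Mu\rangle$. You additionally spell out the verification of that identity via (\ref{eq:cloning}) and the domain issue for $f$, which the paper leaves implicit, but the argument is the same.
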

\begin{proof}
  Using the decomposition (\ref{eq:decomp}) to rewrite traces, we find 
from Jensen's inequality and the concavity of $f$ that 
\begin{eqnarray*}
\lefteqn{\frac1{\dim\cH_{M+k}}\Tr_{\cH_{M+k}}\left[f\left(\frac{M!}{(M+k)!}
\cT^k(\Gamma)\right)\right]}&&\\&=&\int_{\{u\in\C^N\ :\ |u|=1\}}
\langle \otimes^{M+k}u|f\left(\frac{M!}{(M+k)!}
\cT^k(\Gamma)\right)|\otimes^{M+k}u\rangle du\\
&\leq& \int_{\{u\in\C^N\ :\ |u|=1\}}
f\left(\frac{M!}{(M+k)!}
\langle \otimes^{M+k}u|\cT^k(\Gamma)|\otimes^{M+k}u\rangle\right)du \\
&=&\int_{\{u\in\C^N\ :\ |u|=1\}}f\left(\langle \otimes^Mu|\Gamma|
\otimes^Mu\rangle\right)du.
\end{eqnarray*}
The inequality from the first to the third line is the upper
bound in the Berezin-Lieb inequalities. In the last step we used that 
$$
\langle \otimes^{M+k}u|\cT^k(\Gamma)|
\otimes^{M+k}u\rangle=\frac{(M+k)!}{M!}\langle \otimes^Mu|\Gamma|
\otimes^Mu\rangle.
$$
\end{proof}
According to Theorem~\ref{thm:main} we have if $\Tr_M\Gamma=1$ that 
\begin{equation}\label{eq:mainappl}
\Tr_{\cH_{M+k}}\left[f\left(\frac{M!}{(M+k)!}
\cT^k(\Gamma)\right)\right]\geq 
\Tr_{\cH_{M+k}}\left[f\left(\frac{M!}{(M+k)!}
\cT^k(|\otimes^M v\rangle\langle\otimes^M v|)\right)\right]
\end{equation}
We will now study the limit $k\to\infty$ of the right side above. 
We have already seen at the end of the last section that the
eigenvalues of 
$$
\cT^k(|\otimes^M
v\rangle\langle \otimes^Mv|)=M!^{-1}\cT^k(a^*(v)^{M}a(v)^{M})
$$ 
are 
$$
\frac{k!m!}{M!(m-M)!}.
$$
The multiplicity of this eigenvalue is the number of ways we can choose
$N-1$ non-negative
integers summing up to $M+k-m$, i.e., 
$$
{M+k-m+N-2\choose N-2}.
$$
Similarly the dimension of $\cH_M$ is the number of ways in which we can
choose $N$ non-negative integers to sum up to $M$, i.e., 
$$
\dim\cH_M={M+N-1\choose N-1}.
$$
We find that 
\begin{eqnarray*}
\lefteqn{\frac1{\dim\cH_{M+k}}\Tr_{\cH_{M+k}}\left[f\left(\frac{M!}{(M+k)!}
\cT^k(|\otimes^M
v\rangle\langle \otimes^Mv|)\right)\right]}&&\\&=&
{M+k+N-1\choose
N-1}^{-1}\sum_{m=M}^{M+k}f\left(\frac{M!}{(M+k)!}\frac{k!m!}{M!(m-M)!}
\right){M+k-m+N-2\choose
  N-2}
\\&=& \frac{(N-1)(M+k)!}{(M+k+N-1)!}\sum_{m=M}^{M+k} 
f\left(
  \frac{k!m!}{(M+k)!(m-M)!}\right)\frac{(M+k-m+N-2)!}{(M+k-m)!}\\
&=&(N-1)
\sum_{m=M}^{M+k} 
f\left(
\frac{k!m!}{(M+k)!(m-M)!}\right)\frac{(M+k)!(M+k-m+N-2)!}{(M+k-m)!(M+k+N-2)!
}
\\&&\qquad\qquad\qquad\times\frac1{(M+k+N-1)}.
\end{eqnarray*}
It is straightforward to check that as $k\to\infty$ this converges for continuous\footnote{Since $f$ is
  assumed to be concave it is continuous except possibly at the
  endpoints. Discontinuity at the endpoints is not a problem.} $f$ to the integral
\begin{equation}\label{eq:s-int}
(N-1)\int_0^1 f(s^M)(1-s)^{N-2}ds.
\end{equation}
Theorem~\ref{thm:wehrl} follows from Lemma~\ref{lm:BL},
(\ref{eq:mainappl}), and the above calculation if we can show that
$$
(N-1)\int_0^1 f(s^M)(1-s)^{N-2}ds=\int_{\{u\in\C^N\ :\ |u|=1\}}
f\left(\left|\langle \otimes^Mu|\otimes^Mv\rangle\right|^2\right)du.
$$
This is simple. We choose $v=(1,0,\ldots,0)$ then 
$\left|\langle \otimes^Mu|\otimes^Mv\rangle\right|^2=|u_1|^{2M}$, where $u_1$ 
is the first coordinate of $u\in\C^N$. The $(2N-2)$-dimensional measure of the set 
$\{u\in \C^N\ :\ |u|=1, |u_1|=\cos(t)\}$ is proportional to $\cos(t)\sin^{2N-3}(t)$.
Hence
\begin{eqnarray*}
\int_{\{u\in\C^N\ :\ |u|=1\}}
f\left(\left|\langle \otimes^Mu|\otimes^Mv\rangle\right|^2\right)du&=&
\int_{\{u\in\C^N\ : \ \ |u|=1\}\}} f(|u_1|^{2M})du\\&=&2
(N-1)\int_0^{\pi/2} f(\cos^{2M}(t))\cos(t)\sin^{2N-3}(t)dt,
\end{eqnarray*}
which is equal to the integral (\ref{eq:s-int}).

Since our proof of the generalized Wehrl inequality in
Theorem~\ref{thm:wehrl} is based on a limiting argument it does not
establish that coherent states are the only minimizers. In the case of
Glauber states this was proved by Carlen in \cite{Ca}. In contrast,
for finite $k$, the uniqueness is established in
Theorem~\ref{thm:majorization} .

\appendix\section{Irreducibility of the Symmetric Representation}\label{sec:appendix}

We explain here that the Hilbert space $\cH_M$ of totally symmetric
products (see (\ref{eq:pm})) gives an irreducible unitary representation of
$SU(N)$. That it is a unitary representation is clear. If $V$ is an
invariant subspace for the action of the group it is also invariant for the
action of the representation of the Lie-algebra. The Lie-algebra of \su is
the real vector space of traceless anti-Hermitian $N\times N$ matrices. 
If $X$ is any such matrix it is represented on $\cH_M$ by
$$\pi(X)=X_1+\ldots+X_M,$$
where $X_i$ is $X$ acting on the $i$-th tensor factor. Such matrices
hence leave $V$ invariant.  Taking complex linear combinations of
matrices of the form $\pi(X)$ and the identity $I$ we find that
$X_1+\ldots+X_M$ leaves $V$ invariant for all $N\times N$ matrices $X$
(not only those that are traceless Hermitean).  In particular, $V$ is left invariant by
$E_{ij}=X_1+\ldots X_M$ with $X=|e_i\rangle\langle e_j|$ for $e_i, e_j$ elements in an
orthonormal basis $e_1,\ldots,e_N$ in $\C^N$. The
matrices $E_{ii}$, $i=1,\ldots, N$ can be simultaneously diagonalized
on $V$, which hence must contain at least one common eigenvector,
i.e., a vector of the form
\begin{equation}\label{pee}
P_M
(\otimes^{n_1}e_{1})\otimes(\otimes^{n_2}e_{2})\cdots(\otimes^{n_
{ N-1 } } e_ { N- 1} )\otimes (\otimes^{n_N}e_{N}),
\end{equation}
where $n_1+\cdots+n_N=M$. Multiple applications of the matrices
$E_{ij}$ on  one vector of the form (\ref{pee}) can yield
(multiples) of all vectors of the form (\ref{pee}). Consequently, they must 
all belong to $V$, and thus $V$ is all of $\cH_M$. \hfill \qedsymbol



\end{document}